\documentclass[conference]{IEEEtran}  
\IEEEoverridecommandlockouts

\usepackage{cite}
\usepackage{amsmath,amssymb,amsfonts}
\usepackage{amsthm}
\usepackage{algorithm,algorithmic}
\makeatletter
\renewcommand\fs@ruled{\def\@fs@cfont{\bfseries}\let\@fs@capt\floatc@ruled
  \def\@fs@pre{\kern0.06in\hrule height.8pt depth0pt \kern2pt}%
  \def\@fs@post{\kern2pt\hrule\relax}%
  \def\@fs@mid{\kern2pt\hrule\kern2pt}%
  \let\@fs@iftopcapt\iftrue}
\restylefloat{algorithm}
\makeatother
\usepackage[caption=false,font=small,labelfont=sf,textfont=sf]{subfig}
\usepackage{graphicx}
\usepackage{textcomp}
\usepackage{xcolor}
\usepackage{booktabs}
\def\BibTeX{{\rm B\kern-.05em{\sc i\kern-.025em b}\kern-.08em
    T\kern-.1667em\lower.7ex\hbox{E}\kern-.125emX}}

\newcommand{\subc}{{k}{\color{black}}}

\newcommand{\bandK}{{K}{\color{black}}}
\newcommand{\timeT}{{T}{\color{black}}}

\newcommand{\Ns}{N_{\text{s}}}
\newcommand{\NRB}{N_{\text{RB}}}
\newcommand{\NRE}{N_{\text{RE}}}
\newcommand{\NTBS}{N_{\text{TBS}}}

\newcommand{\Nsymb}{N_{\text{symb}}}

\newcommand{\Nt}{N_{\text{T}}}

\newcommand{\Nr}{N_{\text{R}}}

\newcommand{\snreesm}{\text{SNR}_{\text{EESM}}}
\newcommand{\snrxesm}{\text{SNR}_{\text{xESM}}}
\newcommand{\snrth}{\gamma_{\text{th}}}
\newcommand{\bleraw}{p_e}
\newcommand{\Reff}{R_{\text{eff}}}
\newcommand{\Ptotal}{P_{\text{total}}}
\newcommand{\Pstatic}{P_{\text{static}}}

\newcommand{\Pmax}{P_{\text{max}}}

\DeclareMathOperator*{\argmax}{arg\,max}

\newcommand{\SNR}{\text{SNR}}

\newcommand{\figref}[1]{Fig.~\ref{#1}}
\newcommand{\tabref}[1]{Table~\ref{#1}}
\newcommand{\secref}[1]{Section~\ref{#1}}
\newcommand{\algref}[1]{Algorithm~\ref{#1}}

\def\bb0{{\mathbb{0}}}

\def\bb{{\mathbf{b}}}

\def\bp{{\mathbf{p}}}

\def\bs{{\mathbf{s}}}

\def\bv{{\mathbf{v}}}

\def\by{{\mathbf{y}}}

\def\b0{{\mathbf{0}}}

\def\bF{{\mathbf{F}}}

\def\bH{{\mathbf{H}}}
\def\bI{{\mathbf{I}}}

\def\bU{{\mathbf{U}}}
\def\bV{{\mathbf{V}}}

\def\bbC{{\mathbb{C}}}

\def\bbE{{\mathbb{E}}}

\def\bbR{{\mathbb{R}}}

\def\cA{\mathcal{A}}

\def\cI{\mathcal{I}}

\def\cM{\mathcal{M}}
\def\cN{\mathcal{N}}

\def\cS{\mathcal{S}}

\def\sfH{\mathsf{H}}

\def\sfT{\mathsf{T}}

\def\sf0{{\mathsf{0}}}

\def\rm0{{\mathrm{0}}}

\newtheorem{theorem}{Theorem}
\theoremstyle{definition}

\begin{document}
\bstctlcite{IEEEexample:BSTcontrol}
\title{Energy Efficiency Optimization for 5G NR PDSCH:\\ A Cross-Layer Link Abstraction Framework}

\author{\IEEEauthorblockN{Jaebum Park$^{*}$,~Chan-Byoung Chae$^{\dagger}$, and Robert W. Heath Jr.$^{*}$}
\IEEEauthorblockA{$^{*}$Department of Electrical and Computer Engineering, UC San Diego, La Jolla, CA 92093, USA \\
$^{\dagger}$School of Integrated Technology, Yonsei University, Seoul 03722, South Korea\\E-mail: \{jp261, rwheathjr\}@ucsd.edu, cbchae@yonsei.ac.kr }}
\maketitle

\begin{abstract}
Energy efficiency (EE) optimization for cellular links often assumes Shannon-capacity transmission. The 3GPP New Radio (NR) physical downlink shared channel (PDSCH), however, delivers a transport block whose size is fixed by the modulation and coding scheme (MCS), the rank, and the resource allocation chosen by the scheduler. The transport block must also meet a target block error rate (BLER). In this paper, we optimize EE for the NR PDSCH with the standard transport block size as the throughput and the target BLER as the constraint. Exponential effective signal-to-noise ratio (SNR) mapping (EESM) turns the BLER constraint into a convex per-subcarrier power constraint. We derive a closed-form waterfilling-like power allocation and an EE-maximizing MCS selection rule. With a single calibration parameter per MCS, fitted offline over 3GPP tapped delay line C (TDL-C) channels, EESM predicts the BLER without per-realization link-level simulation. Numerical results show that a Shannon-capacity baseline overestimates the EE of equal-power transmission by 1.5 to 1.7 times at high SNR within the transmit power budget.\\
\end{abstract}

\begin{IEEEkeywords}
Energy efficiency, 5G NR, PDSCH, EESM, link abstraction, convex optimization.
\end{IEEEkeywords}

\section{Introduction}
The base station consumes the dominant share of mobile network energy, and 5G NR densification amplifies this share~\cite{GreenFutureNetworks2021}. 3GPP NR base stations transmit under discrete modulation, code rate, and target BLER constraints on the physical downlink shared channel (PDSCH)~\cite{3gppTS38.214}, which produce a stepwise relationship between transmit power and effective throughput. Most energy efficiency studies, however, assume Shannon-capacity transmission, and this continuous-rate assumption inflates the predicted energy efficiency of deployed systems~\cite{FengEtAlSurveyEnergyefficientWireless2013a, 3gpp:6GWorkshopReport2025}.

Prior work on energy efficiency optimization for cellular links has predominantly relied on Shannon capacity, with gains shown in orthogonal frequency-division multiple access (OFDMA) systems with large antenna arrays~\cite{NgEtAlEnergyEfficientResourceAllocation2012} and in frequency-selective link adaptation with circuit power awareness~\cite{MiaoEtAlEnergyefficientLinkAdaptation2010}. Our earlier work addressed cross-layer multiple-input multiple-output (MIMO) adaptation~\cite{KimEtAlCrosslayerApproachEnergy2009, ChaeEtAlAdaptiveMIMOTransmission2010} and adaptive waveform and mode switching for power amplifier (PA) efficiency~\cite{ParkEtAlSwitchDFTAdaptiveWaveform2026}. Both lines of work used Shannon rates. Fractional energy efficiency objectives are classically handled through Dinkelbach's parametric approach \cite{IshedenEtAlFrameworkLinkLevelEnergy2012}. None of these formulations capture the discrete MCS structure and BLER constraints of 3GPP NR transmission. Under a target BLER, the throughput of a fixed MCS and rank is a constant, so the fractional objective reduces to transmit-power minimization.

Link-level simulation reproduces the BLER of a 3GPP NR transport block but gives no expression for the BLER as a function of the per-subcarrier power that an optimization can constrain. Effective SNR mapping (ESM) compresses the per-subcarrier SNR vector into a scalar equivalent SNR that predicts BLER through precomputed additive white Gaussian noise (AWGN) reference curves~\cite{HeathLozanoFoundationsMIMOCommunications2018}. EESM \cite{BlankenshipEtAlLinkErrorPrediction2004a} and the mutual-information-based ESM (MIESM) \cite{BrueninghausEtAlLinkPerformanceModels2005} support system-level simulation and link adaptation \cite{FrancisMehtaEESMBasedLinkAdaptation2014}. The compression function of MIESM has no closed-form expression. Learning-based approaches that optimize MCS and power jointly \cite{ParsaEtAlJointPowerAllocation2022} require training and give no analytic solution. The exponential form of EESM makes the BLER constraint of a discrete-MCS link convex in the per-subcarrier power. This convexity gives a closed-form power allocation and an analytic MCS rule under the discrete rates and the target BLER of the standard, which Shannon-based EE formulations do not model.

In this paper, we optimize the EE of the NR PDSCH with the standard transport block size (TBS) as the throughput and the target BLER as the constraint. The scheduler supplies the MCS candidates, the rank, and the resource allocation. The physical (PHY) layer returns the per-subcarrier power and the EE-maximizing MCS. First, we formulate EE maximization under EESM link abstraction and show that, for a fixed MCS and rank, it reduces to a convex transmit-power minimization. Second, we derive a closed-form waterfilling-like power allocation for single-stream transmission, where the water level is set by the EESM threshold and the calibration parameter rather than by a power budget. The multi-stream case is solved numerically with a calibration parameter fitted per rank. Third, we fit one calibration parameter per MCS offline over 3GPP NR tapped delay line C (TDL-C) channels and validate the EESM-predicted BLER against link-level simulation. With equal power across subcarriers, a Shannon-capacity baseline overestimates the single-stream EE by 1.5 to 1.7 times within the transmit power budget. A continuous-rate rank comparison under a radio frequency (RF) power model shows that single-stream transmission stays EE-optimal up to a traffic-demand threshold. The journal version of this paper~\cite{ParkEtAlEnergyEfficiencyOptimization2026} extends the single-link framework to distributed MIMO with baseband and fronthaul power and with traffic-aware mode selection.

\section{System model}
\label{sec:system}
We consider a 3GPP NR downlink PDSCH link between a base station with $\Nt$ transmit antennas and a user equipment with $\Nr$ receive antennas. The maximum number of spatial streams is $\Ns = \min(\Nr, \Nt)$. Throughout this paper, $\subc \in \{0, 1, \ldots, \bandK-1\}$ indexes the $\bandK$ data subcarriers and $j$ indexes a spatial stream. Bold lowercase and uppercase letters denote vectors and matrices, and $(\cdot)^{\sfT}$ and $(\cdot)^{\sfH}$ denote the transpose and the conjugate transpose. We assume perfect channel state information at the base station, obtained through uplink sounding under time-division duplex reciprocity, to isolate the effect of the MCS and BLER constraints.

\subsection{Signal and channel model}
The base station computes the singular value decomposition (SVD) of the frequency-domain channel matrix $\bH[\subc] \in \bbC^{\Nr \times \Nt}$ as
\begin{align}
    \bH[\subc] = \bU[\subc] \mathbf{\Sigma}[\subc] \bV[\subc]^{\sfH},
    \label{eq:svd}
\end{align}
where $\bU[\subc]$ and $\bV[\subc]$ are unitary and $\mathbf{\Sigma}[\subc] \in \bbR^{\Nr \times \Nt}$ contains the singular values $\sigma_0[\subc] \geq \sigma_1[\subc] \geq \cdots \geq \sigma_{\Ns-1}[\subc] \geq 0$. For a transmission rank $r \in \{1, \ldots, \Ns\}$, the base station sets the precoder $\bF[\subc] \in \bbC^{\Nt \times r}$ to the first $r$ columns of $\bV[\subc]$, so $\bF[\subc]$ has orthonormal columns. The data symbol vector $\bs[\subc] \in \bbC^{r \times 1}$ satisfies $\bbE[\bs[\subc]\bs[\subc]^{\sfH}] = \bI_r$. Let $P[\subc] \geq 0$ denote the transmit power at subcarrier $\subc$, shared equally across the $r$ active streams. The vector $\bp = [P[0], \ldots, P[\bandK-1]]^{\sfT}$ collects these powers. The noise $\bv[\subc] \sim \cN_{\bbC}(\b0, N_0 \Delta f \, \bI_{\Nr})$ is independent across subcarriers, where $N_0$ is the noise power spectral density and $\Delta f$ is the subcarrier spacing. The received signal is
\begin{equation}
    \by[\subc] = \sqrt{P[\subc]/r} \, \bH[\subc] \bF[\subc] \bs[\subc] + \bv[\subc].
    \label{eq:rx_signal}
\end{equation}
The receiver combines $\by[\subc]$ with the first $r$ columns of $\bU[\subc]$ to remove inter-stream interference. The per-stream SNR at subcarrier $\subc$ for stream $j$ is~\cite{HeathLozanoFoundationsMIMOCommunications2018}
\begin{equation}
\gamma_j[\subc] = \frac{\sigma_j^2[\subc] P[\subc]}{r \, N_0 \Delta f},
\quad j = 0, 1, \ldots, r-1,
\label{eq:per_stream_snr}
\end{equation}
which is the input to the link abstraction of \secref{sec:eesm}.

\begin{table}[t]
\caption{PDSCH MCS parameters (MCS 4, 10, 19, and 27 \cite{3gppTS38.214})\label{tab:mcs}}
\begin{center}
    {\renewcommand{\arraystretch}{1.1}%
    \small
        \begin{tabular}{c|c|c|c}
        \hline\hline
        $m$ & $Q_m$ & $R_m$ & Spectral efficiency (bits/RE) \\
        \hline
        0 & 2 & 602/1024 & 1.1758 \\
        1 & 4 & 658/1024 & 2.5703 \\
        2 & 6 & 873/1024 & 5.1152 \\
        3 & 8 & 948/1024 & 7.4063 \\
        \hline\hline
        \end{tabular}
    }
\end{center}
\end{table}

\subsection{3GPP NR frame structure and transport block size}
The system uses the 3GPP NR frame structure with numerology $\mu \in \{0, 1, 2\}$ \cite{3gppTS38.211}, which sets the subcarrier spacing $\Delta f = 2^{\mu} \times 15$ kHz and the slot duration $\timeT_{\text{slot}} = 2^{-\mu}$ ms. Each slot contains $\Nsymb = 14$ orthogonal frequency-division multiplexing (OFDM) symbols. Let $Q_m$ denote the modulation order and $R_m$ the code rate for MCS index $m \in \cM$, where $\cM$ is the set of MCS candidates. \tabref{tab:mcs} lists four representative MCS indices, where $m = 0, 1, 2, 3$ denote MCS 4, 10, 19, and 27 of Table 5.1.3.1-2 in TS 38.214~\cite{3gppTS38.214}, the highest-code-rate entry of each modulation order from QPSK to 256-QAM. Four indices keep the offline calibration tractable. The framework applies to any MCS entry once its threshold and calibration parameter are fitted offline.

The TBS follows from the MCS, the number of allocated physical resource blocks (PRBs), and the rank through the count of data resource elements (REs). We denote by $\NRB$ the number of PRBs in the system bandwidth and by $n \in \{1, \ldots, \NRB\}$ the number of allocated PRBs. One demodulation reference signal (DMRS) symbol per slot carries no data in the DMRS pattern of TS 38.211 \cite{3gppTS38.211}. The remaining $\NRE^{\text{PRB}} = 156$ of the 168 REs in a PRB carry data. The TBS for MCS $m$, $n$ allocated PRBs, and rank $r$ is
\begin{equation}
\NTBS(m, n, r) = \NRE^{\text{PRB}} \, R_m \, Q_m \, n \, r,
\label{eq:tbs}
\end{equation}
which is the number of information bits per slot before the TBS quantization of~\cite{3gppTS38.214}. The quantization changes this value by a small fraction, so we treat \eqref{eq:tbs} as the TBS. In the single-user setting, the medium access control (MAC) scheduler allocates the full bandwidth, $n = \NRB$. We write $\NTBS(m, r)$ for $\NTBS(m, \NRB, r)$ throughout.

\section{A cross-layer link abstraction}
\label{sec:eesm}
The link abstraction maps the per-subcarrier SNR to a predicted BLER for each MCS. This mapping gives the optimization of \secref{sec:optimization} a tractable constraint. \figref{fig:block_diagram} shows the proposed cross-layer framework. An offline phase calibrates a per-MCS lookup table. At every slot, an online phase couples the MAC-layer resource allocation with the PHY-layer link abstraction through this table.

\begin{figure}[t]
\centering
\includegraphics[width=0.95\columnwidth]{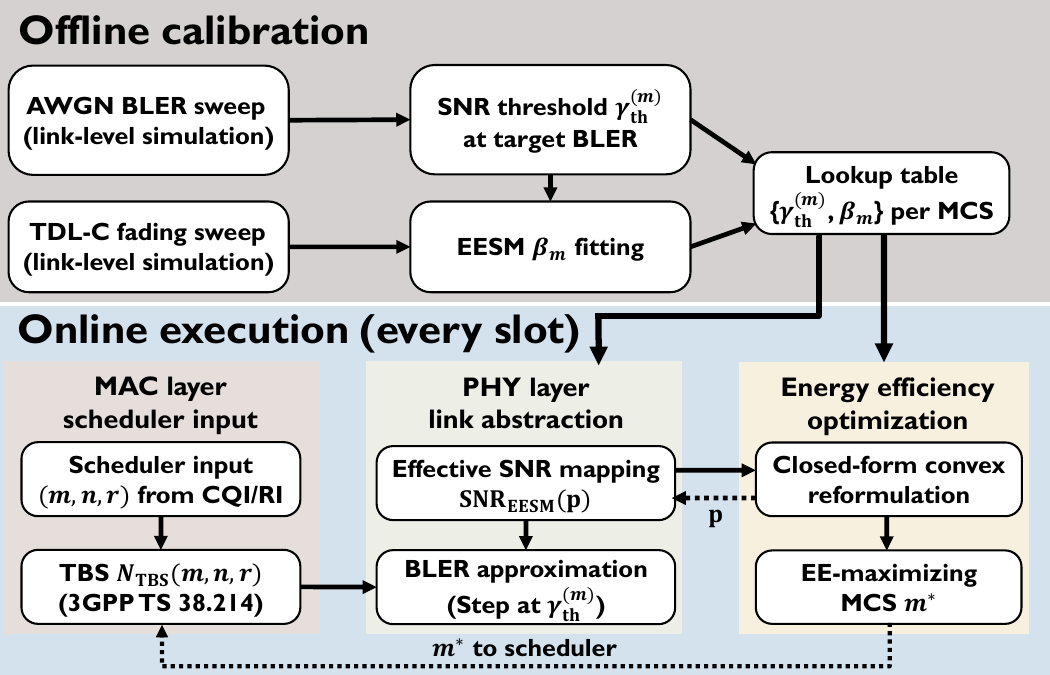}
\caption{Block diagram of the cross-layer link abstraction framework. The offline phase calibrates a per-MCS lookup table $\{\gamma_{\text{th}}^{(m)}, \beta_m\}$ from AWGN BLER references and 3GPP NR TDL-C realizations. At every slot, the MAC scheduler supplies the MCS candidates, the rank, and the PRB allocation from the channel quality indicator (CQI) and rank indicator (RI) reports. The optimization returns the power allocation and the EE-maximizing MCS.}
\label{fig:block_diagram}
\end{figure}

\subsection{Effective SNR mapping framework}
ESM compresses the per-subcarrier SNR vector $(\gamma[0], \ldots, \gamma[\bandK-1])$ into a scalar equivalent SNR. The mapped SNR predicts BLER by matching the frequency-selective channel to an AWGN channel at that SNR. For a single stream, let $\gamma[\subc]$ denote the SNR at subcarrier $\subc$. All ESM variants use the form
\begin{equation}
\snrxesm \approx \beta \, g^{-1}\!\left( \frac{1}{\bandK} \sum_{\subc=0}^{\bandK-1} g\!\left( \frac{\gamma[\subc]}{\beta} \right) \right),
\label{eq:xesm}
\end{equation}
where $g(\cdot)$ is a monotone mapping function and $\beta > 0$ is a calibration parameter. EESM uses $g(\gamma) = e^{-\gamma}$~\cite{BlankenshipEtAlLinkErrorPrediction2004a}, so the BLER constraint is an explicit sum of exponentials in the per-subcarrier power. MIESM uses the constellation-constrained mutual information, whose lack of a closed-form expression rules out a closed-form allocation. We adopt EESM for this tractability, although MIESM predicts BLER more accurately~\cite{BrueninghausEtAlLinkPerformanceModels2005}.

\subsection{EESM derivation}
The original EESM averages $e^{-\gamma[\subc]/\beta}$ over the $\bandK$ subcarriers that carry a single-stream codeword and inverts the average through the logarithm~\cite{BlankenshipEtAlLinkErrorPrediction2004a}. For $r > 1$, the combining of Section~\ref{sec:system} leaves $r$ interference-free streams. 3GPP NR maps one transport block onto up to four layers as a single codeword \cite{3gppTS38.211}, so the transport block succeeds or fails jointly across all $r$ streams. The codeword therefore spans the $r\bandK$ sub-channels indexed by $(j, \subc)$. With $a_{j,\subc} = \sigma_j^2[\subc]/(\beta r N_0 \Delta f)$, the EESM over these sub-channels gives the effective SNR as a function of $\bp$,
\begin{align}
\snreesm(\bp) = -\beta \ln\!\left( \frac{1}{r\bandK} \sum_{j=0}^{r-1} \sum_{\subc=0}^{\bandK-1} e^{-a_{j,\subc} P[\subc]} \right).
\label{eq:eesm_multistream}
\end{align}
Let $p_e(\cdot)$ denote the AWGN BLER curve of MCS $m$ and $\snrth^{(m)}$ the SNR at which it equals the target BLER $p_e^{\text{target}}$. The AWGN BLER curves of the NR low-density parity-check (LDPC) code in Fig.~\ref{fig:eesm_calib} have a waterfall transition within approximately 1 dB for long transport blocks, whereas adjacent thresholds $\snrth^{(m)}$ are several dB apart. We therefore approximate the BLER by the step function
\begin{equation}
\bleraw(\SNR) \approx
\begin{cases}
1, & \SNR < \snrth^{(m)},\\
0, & \SNR \geq \snrth^{(m)},
\end{cases}
\label{eq:step_bler}
\end{equation}
which makes the effective throughput a deterministic function of $\snreesm(\bp)$. The approximation loses accuracy when the target lies in the error-floor region or when adjacent thresholds are closer than the waterfall width, as with a denser MCS set.

\subsection{Offline EESM calibration}
The calibration parameter $\beta$ in \eqref{eq:eesm_multistream} is fitted offline for each MCS index $m$ by \algref{alg:calibration}, following \cite{LagenEtAlNewRadioPhysical2020}. The algorithm runs single-stream link-level simulation over TDL-C realizations on an SNR grid with 1 dB steps, and its BLER window excludes the error-floor and saturation regions. For $r > 1$, repeating the algorithm with the target $r$ in \eqref{eq:eesm_multistream} gives a per-rank parameter $\beta_m^{(r)}$. The calibrated $\beta_m$ is insensitive to the channel realization and depends on the MCS, consistent with prior EESM link-error prediction~\cite{DonthiMehtaAccurateModelEESM2011}. This paper adds its use inside the optimization constraint of Section~\ref{sec:optimization}, where a single scalar per MCS preserves the convex structure.

\begin{algorithm}[t]
\caption{Offline EESM calibration for MCS index $m$}
\label{alg:calibration}
\begin{algorithmic}[1]
\STATE \textbf{Input:} MCS index $m$, SNR grid $\{\gamma_{\SNR}^{(i)}\}_{i=1}^{N_{\SNR}}$,\\ channel realizations per SNR point $N_{\text{cal}}$
\STATE \textbf{Output:} Calibrated parameter $\beta_m$
\STATE Generate AWGN BLER reference curve $\bleraw(\gamma)$ for $m$
\STATE Initialize the set of per-realization fits $\cS_m \gets \emptyset$
\FOR{$i = 1$ to $N_{\SNR}$}
    \STATE Set operating SNR $\gamma_{\SNR} \gets \gamma_{\SNR}^{(i)}$
    \STATE Run link-level simulation over the TDL-C channel to obtain $\text{BLER}^{(i)}$ at $\gamma_{\SNR}$
    \IF{$\text{BLER}^{(i)} \in [10^{-3}, 5 \times 10^{-2}]$}
        \STATE Find $\gamma^{(i)}_{\text{target}}$ such that $\bleraw(\gamma^{(i)}_{\text{target}}) = \text{BLER}^{(i)}$
        \FOR{$\ell = 1$ to $N_{\text{cal}}$}
            \STATE Generate TDL-C channel realization $\bH^{(\ell)}$ and compute $\{\gamma[\subc]^{(\ell)}\}_{\subc=0}^{\bandK-1}$ at $\gamma_{\SNR}$
            \STATE Solve for the $\beta^{(i,\ell)}$ satisfying $-\beta^{(i,\ell)} \ln\!\left( \frac{1}{\bandK} \sum_{\subc=0}^{\bandK-1} e^{-\gamma[\subc]^{(\ell)}/\beta^{(i,\ell)}} \right) = \gamma^{(i)}_{\text{target}}$
            \STATE $\cS_m \gets \cS_m \cup \{\beta^{(i,\ell)}\}$
        \ENDFOR
    \ENDIF
\ENDFOR
\STATE \textbf{return} $\beta_m \gets \frac{1}{|\cS_m|} \sum_{\beta \in \cS_m} \beta$
\end{algorithmic}
\end{algorithm}

\section{Energy-efficient optimization}
\label{sec:optimization}
We formulate EE maximization as a cross-layer optimization. The MAC-layer TBS sets the numerator of the EE and the PHY-layer power sets the denominator. The EESM constraint ties the two through $\snrth^{(m)}$ and $\beta_m$. Since the power minimization for each MCS candidate has a closed-form solution for $r = 1$ and a scalar root-finding solution for $r > 1$, the selection needs neither link-level simulation nor a learned policy.

\subsection{Power model and energy efficiency metric}
The total power decomposes into a static component and a transmit-power component as
\begin{equation}
\Ptotal(\bp) = \Pstatic + \frac{1}{\eta_{\text{PA}}} \mathbf{1}^{\sfT} \bp,
\label{eq:power_model}
\end{equation}
where $\Pstatic$ aggregates the baseband, RF chain, and idle-hardware contributions, $\mathbf{1}$ is the all-ones vector, and $\eta_{\text{PA}} \in (0,1]$ is the PA efficiency. Baseband and fronthaul terms that scale with the rank or the PRB allocation are constants for a fixed $(m, r, n)$~\cite{AuerEtAlHowMuchEnergy2011}. They leave the power allocation unchanged and only shift the rank crossover of \secref{sec:crossover}. The effective throughput is
\begin{equation}
\Reff(m, r, \bp) = (1 - \bleraw(\snreesm(\bp))) \, \frac{\NTBS(m, r)}{\timeT_{\text{slot}}}.
\label{eq:reff}
\end{equation}
The energy efficiency is 
\begin{equation}
\eta_{\text{EE}}(m, r, \bp) = \frac{\Reff(m, r, \bp)}{\Ptotal(\bp)},
\label{eq:ee_metric}
\end{equation}
which the optimization maximizes over the MCS index $m$ and the power vector $\bp$ for the rank $r$ given by the scheduler.

\subsection{Convex problem formulation}
Under the step-function approximation \eqref{eq:step_bler}, the effective throughput equals $\NTBS(m,r)/\timeT_{\text{slot}}$ when $\snreesm(\bp) \geq \snrth^{(m)}$ and zero otherwise. For a fixed $(m, r)$, this throughput is constant and $\Pstatic$ is a fixed offset. The radiated power at subcarrier $\subc$ equals $P[\subc]$ because $\bF[\subc]$ has orthonormal columns, so the PA budget $\Pmax$ bounds $\mathbf{1}^{\sfT} \bp$. Maximizing $\eta_{\text{EE}}$ therefore reduces to the power minimization
\begin{align}
\min_{\bp} \quad & \mathbf{1}^{\sfT} \bp, \label{eq:p1_obj}\\
\text{s.t.} \quad & \snreesm(\bp) \geq \snrth^{(m)}, \label{eq:p1_eesm}\\
& \mathbf{1}^{\sfT} \bp \leq \Pmax, \label{eq:p1_pmax}\\
& \bp \succeq \mathbf{0}. \label{eq:p1_nonneg}
\end{align}
We substitute \eqref{eq:eesm_multistream} into \eqref{eq:p1_eesm}, multiply by $-1/\beta_m$, and apply the exponential function to obtain the equivalent constraint
\begin{equation}
\frac{1}{r\bandK} \sum_{j=0}^{r-1} \sum_{\subc=0}^{\bandK-1} e^{-a_{j,\subc} P[\subc]} \leq e^{-\snrth^{(m)}/\beta_m}.
\label{eq:eesm_convex}
\end{equation}
The objective is linear. The constraint \eqref{eq:eesm_convex} is a sublevel set of a sum of exponentials, each convex in $P[\subc]$. The problem is therefore convex, and the Karush-Kuhn-Tucker (KKT) conditions give the global optimum.

\subsection{Closed-form power allocation}
Let $\lambda \geq 0$, $\nu \geq 0$, and $\xi_{\subc} \geq 0$ be the multipliers for the EESM, total-power, and non-negativity constraints. With $a_{j,\subc}$ evaluated at $\beta = \beta_m$, the stationarity condition is
\begin{equation}
1 + \nu - \frac{\lambda}{r\bandK} \sum_{j=0}^{r-1} a_{j,\subc} e^{-a_{j,\subc} P[\subc]} - \xi_{\subc} = 0.
\label{eq:lagrangian}
\end{equation}
For $r = 1$, the sum in \eqref{eq:lagrangian} reduces to a single exponential, and Theorem~\ref{thm:r1} gives the solution in closed form. For $r > 1$, \eqref{eq:lagrangian} is transcendental in $P[\subc]$ and is solved by a numerical root-finder.

\begin{theorem}
\label{thm:r1}
Let $\cA = \{\subc : P^*[\subc] > 0\}$ denote the active set and $\cI = \{0, \ldots, \bandK-1\} \setminus \cA$ the inactive set. For $r = 1$ and an MCS $m$ whose minimum transmit power satisfies the budget $\Pmax$, the optimal power allocation for $\subc \in \cA$ is
\begin{equation}
P^*[\subc] = \frac{\beta_m N_0 \Delta f}{\sigma_0^2[\subc]} \ln\!\left( \frac{\lambda_m \sigma_0^2[\subc]}{\bandK \, \beta_m N_0 \Delta f} \right),
\label{eq:p_star}
\end{equation}
where $\lambda_m > 0$, the multiplier $\lambda$ of \eqref{eq:lagrangian} for MCS $m$, satisfies the EESM constraint with equality on $\cA$ and is given by
\begin{equation}
\lambda_m = \left( \sum_{\subc \in \cA} \frac{1}{\sigma_0^2[\subc]} \right) \frac{\beta_m N_0 \Delta f}{e^{-\snrth^{(m)}/\beta_m} - |\cI|/\bandK}.
\label{eq:lambda_star}
\end{equation}
\end{theorem}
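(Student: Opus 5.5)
Since the power-minimization problem \eqref{eq:p1_obj}--\eqref{eq:p1_nonneg} is convex and Slater's condition holds (any strictly feasible $\bp$ with all components positive and $\snreesm(\bp) > \snrth^{(m)}$, $\mathbf{1}^{\sfT}\bp < \Pmax$ exists under the budget assumption), the KKT conditions are necessary and sufficient. The plan is therefore to specialize the stationarity condition \eqref{eq:lagrangian} to $r=1$, solve it on the active set, and use complementary slackness to fix $\lambda$. With $r=1$, write $a_\subc = \sigma_0^2[\subc]/(\beta_m N_0\Delta f)$.

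First I dispose of $\nu$. The hypothesis says the minimum transmit power of MCS $m$ is within $\Pmax$. Hence the budget constraint is either inactive, giving $\nu = 0$, or is exactly tight at the unconstrained minimum. In the tight case dropping \eqref{eq:p1_pmax} leaves the optimizer unchanged, so we may still take $\nu=0$. I also argue $\lambda>0$. If $\lambda = 0$, stationarity forces $1 - \xi_\subc = 0$, so $\xi_\subc=1$ for all $\subc$. Then $\bp = \mathbf{0}$, which violates \eqref{eq:eesm_convex} whenever $\snrth^{(m)} > 0$. Consequently the EESM constraint is active at the optimum.

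On $\cA$ we have $\xi_\subc=0$, and \eqref{eq:lagrangian} becomes $1 = (\lambda/\bandK)\, a_\subc e^{-a_\subc P[\subc]}$. Solving for $P[\subc]$ gives $P^*[\subc] = a_\subc^{-1}\ln(\lambda a_\subc/\bandK)$, which is exactly \eqref{eq:p_star} with $\lambda_m=\lambda$. This step also yields $e^{-a_\subc P^*[\subc]} = \bandK/(\lambda a_\subc)$. On $\cI$ we have $P^*[\subc]=0$ and each exponential equals $1$. Consistency there requires $\xi_\subc = 1 - \lambda a_\subc/\bandK \ge 0$, which gives the usual waterfilling characterization of the active set: $\subc\in\cA$ iff $\lambda a_\subc > \bandK$.

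I then substitute these values into \eqref{eq:eesm_convex} held with equality:
\[
\frac{1}{\bandK}\Big(\sum_{\subc\in\cA}\frac{\bandK}{\lambda a_\subc} + |\cI|\Big) = e^{-\snrth^{(m)}/\beta_m}.
\]
Solving for $\lambda$ gives
\[
\lambda = \Big(\sum_{\subc\in\cA} a_\subc^{-1}\Big)\Big/\big(e^{-\snrth^{(m)}/\beta_m} - |\cI|/\bandK\big).
\]
Inserting $a_\subc^{-1} = \beta_m N_0\Delta f/\sigma_0^2[\subc]$ yields \eqref{eq:lambda_star}.

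The main obstacle is making this self-consistent. Positivity of $\lambda_m$ needs $e^{-\snrth^{(m)}/\beta_m} > |\cI|/\bandK$, and $\cA$ itself depends on $\lambda_m$. I would handle this by sorting subcarriers by $\sigma_0^2[\subc]$. The threshold rule $\lambda a_\subc>\bandK$ then makes $\cA$ a prefix of the sorted list, and I would argue that the correct prefix is the unique one consistent with its own $\lambda_m$. Uniqueness follows from strict convexity of the objective restricted to the constraint boundary, or from monotonicity of the constraint value in $\lambda$. This also shows the denominator is positive: since each active term satisfies $\bandK/(\lambda a_\subc)>0$, the sum of inactive terms, $|\cI|/\bandK$, is strictly less than $e^{-\snrth^{(m)}/\beta_m}$.
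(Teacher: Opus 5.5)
Your proposal is correct and follows the paper's proof essentially step for step: specialize \eqref{eq:lagrangian} to $r=1$, set $\nu=0$, rule out $\lambda=0$ because $\bp=\mathbf{0}$ violates \eqref{eq:eesm_convex}, solve $e^{-a_{\subc}P^*[\subc]}=\bandK/(\lambda a_{\subc})$ on $\cA$, and substitute into the tight EESM constraint to obtain \eqref{eq:lambda_star}. Your extra steps (a constraint qualification, $\xi_{\subc}=1-\lambda a_{\subc}/\bandK$ on $\cI$, and positivity of the denominator) are sound refinements that the paper leaves implicit; the only slip is that when the minimum power equals $\Pmax$ exactly, no point with $\mathbf{1}^{\sfT}\bp<\Pmax$ satisfies the EESM constraint strictly, so Slater should be invoked for the problem without \eqref{eq:p1_pmax}, which your second paragraph already effectively does.
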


\begin{IEEEproof}
We set $r = 1$ and $a_{\subc} = \sigma_0^2[\subc]/(\beta_m N_0 \Delta f)$ in \eqref{eq:lagrangian}. The budget constraint is inactive for such an MCS, so $\nu = 0$. Complementary slackness gives $\xi_{\subc} = 0$ for $\subc \in \cA$, which yields $e^{-a_{\subc} P^*[\subc]} = \bandK/(\lambda_m a_{\subc})$. Taking the logarithm produces \eqref{eq:p_star}. We substitute back into the EESM equality constraint and solve for $\lambda_m$ to obtain \eqref{eq:lambda_star}. The multiplier satisfies $\lambda_m > 0$, since $\lambda_m = 0$ forces $P^*[\subc] = 0$ for all $\subc$ through \eqref{eq:lagrangian}, in which case the left side of \eqref{eq:eesm_convex} equals one and exceeds the right side. Subcarriers with $\lambda_m a_{\subc}/\bandK \leq 1$ are inactive, which yields the active set partition in \eqref{eq:p_star}.
\end{IEEEproof}

The closed form in \eqref{eq:p_star} has a waterfilling-like structure, and an iterative drop procedure finds its active set. The procedure starts from $\cA = \{0, \ldots, \bandK - 1\}$, computes $\lambda_m$ from \eqref{eq:lambda_star}, evaluates \eqref{eq:p_star}, and removes the subcarriers with $P^*[\subc] \leq 0$ until the set no longer changes. Classical waterfilling sets the water level as the dual variable of a power budget. The multiplier $\lambda_m$ is instead the dual variable of the EESM threshold constraint, so it depends on $\snrth^{(m)}$ and $\beta_m$. Algorithm~\ref{alg:joint} gives the online phase of Fig.~\ref{fig:block_diagram}. It evaluates the closed form for $r = 1$ and the numerical solution for $r > 1$, and it returns the MCS with the highest EE.

\begin{algorithm}[t]
\caption{EE-maximizing MCS selection and power allocation}
\label{alg:joint}
\begin{algorithmic}[1]
\STATE \textbf{Input:} rank $r$, MCS candidates $\cM$, singular values $\{\sigma_j^2[\subc]\}$, lookup table $\{\snrth^{(m)}, \beta_m\}$, noise $N_0 \Delta f$, budget $\Pmax$
\STATE \textbf{Output:} MCS $m^*$ and power $\{P^*[\subc]\}$
\FOR{each $m \in \cM$}
\IF{$r = 1$}
\STATE Compute $\lambda_m$ from \eqref{eq:lambda_star} and $P^{(m)}[\subc]$ from \eqref{eq:p_star}
\ELSE
\STATE Numerically solve \eqref{eq:lagrangian} with the EESM equality constraint to obtain $P^{(m)}[\subc]$
\ENDIF
\STATE Compute $\eta_{\text{EE}}^{(m)}$ from \eqref{eq:ee_metric} and set it to zero if $\sum_{\subc} P^{(m)}[\subc] > \Pmax$
\ENDFOR
\STATE $m^* = \argmax_{m} \eta_{\text{EE}}^{(m)}$, \quad $P^*[\subc] = P^{(m^*)}[\subc]$
\end{algorithmic}
\end{algorithm}

For $r = 1$, each pass of the drop procedure costs $O(\bandK)$ operations and removes at least one subcarrier, so the procedure ends within $|\cI| + 1$ passes and costs $O(\bandK^2)$ in the worst case. For $r > 1$, each iteration of the multiplier search solves \eqref{eq:lagrangian} by scalar root-finding on every subcarrier, which costs $O(\bandK)$ function evaluations. The loop repeats this computation for each of the $|\cM|$ MCS candidates. No link-level simulation runs online, since $\beta_m$ and $\snrth^{(m)}$ come from the offline lookup table.

\section{Simulation results}
\label{sec:simulation}
We evaluate the framework with a 3GPP NR PDSCH link-level simulator. The simulator performs LDPC coding, OFDM modulation, and minimum mean-square-error detection under perfect receive-side channel knowledge. The LDPC decoder uses base graph 1 with layered offset min-sum scheduling and at most 10 iterations, so the calibrated $\beta_m$ is specific to this receiver. Each TDL-C operating point runs up to $10^5$ slots and stops after 1000 block errors. Every slot draws an independent channel realization with a maximum Doppler shift of 9 Hz. The per-subcarrier gains of $N_{\text{cal}} = 100$ realizations per SNR point are stored for the calibration. Table~\ref{tab:sim} lists the simulation parameters. \secref{sec:calib} and \secref{sec:ee_snr} evaluate a single-input single-output (SISO) link, for which Theorem~\ref{thm:r1} applies. \secref{sec:crossover} compares the two ranks for $\Nt = \Nr = 2$ with an RF-only power model defined there.

\begin{table}[t]
\caption{Simulation parameters. Sections V-A and V-B use $\Nt = \Nr = 1$, and Section V-C uses $\Nt = \Nr = 2$.\label{tab:sim}}
\begin{center}
    {\renewcommand{\arraystretch}{1.1}%
    \small
    \setlength{\tabcolsep}{3pt}%
        \begin{tabular}{c|c|c|c}
        \hline\hline
        Parameter & Value & Parameter & Value \\
        \hline
        $f_c$ & 3.5 GHz & $\Pmax$ & 33 dBm \\
        $\Delta f$ & 30 kHz ($\mu = 1$) & $\eta_{\text{PA}}$ & 0.38 \\
        $\bandK$ & 3276 & $\Pstatic$ & 20 W \\
        $\NRB$ & 273 & $p_e^{\text{target}}$ & $10^{-2}$ \\
        Path loss & 82.9 dB & $N_0$ & $-174$ dBm/Hz \\
        Noise figure & 7 dB & $N_{\text{cal}}$ & 100 \\
        $\timeT_{\text{slot}}$ & 0.5 ms & $N_{\SNR}$ & 20 \\
        Delay spread & 100 ns & & \\
        \hline\hline
        \end{tabular}
    }
\end{center}
\end{table}

\subsection{EESM calibration accuracy}
\label{sec:calib}
A single $\beta_m$ per MCS maps the frequency-selective TDL-C channel to an equivalent AWGN point. Figure~\ref{fig:eesm_calib} compares the EESM-predicted BLER over TDL-C channels with a 100 ns delay spread to the AWGN reference curves of the four MCS indices. For each MCS and SNR point, the stored realizations give $\snreesm$ values with the calibrated $\beta_m$. Within the BLER window of \algref{alg:calibration}, the resulting points align with the AWGN curve of every MCS. The thresholds $\snrth^{(m)}$ at the target BLER are 2.9, 9.1, 18.0, and 25.6 dB, several dB apart as \eqref{eq:step_bler} assumes.

\begin{figure}[t]
\centering
\includegraphics[width=0.85\columnwidth]{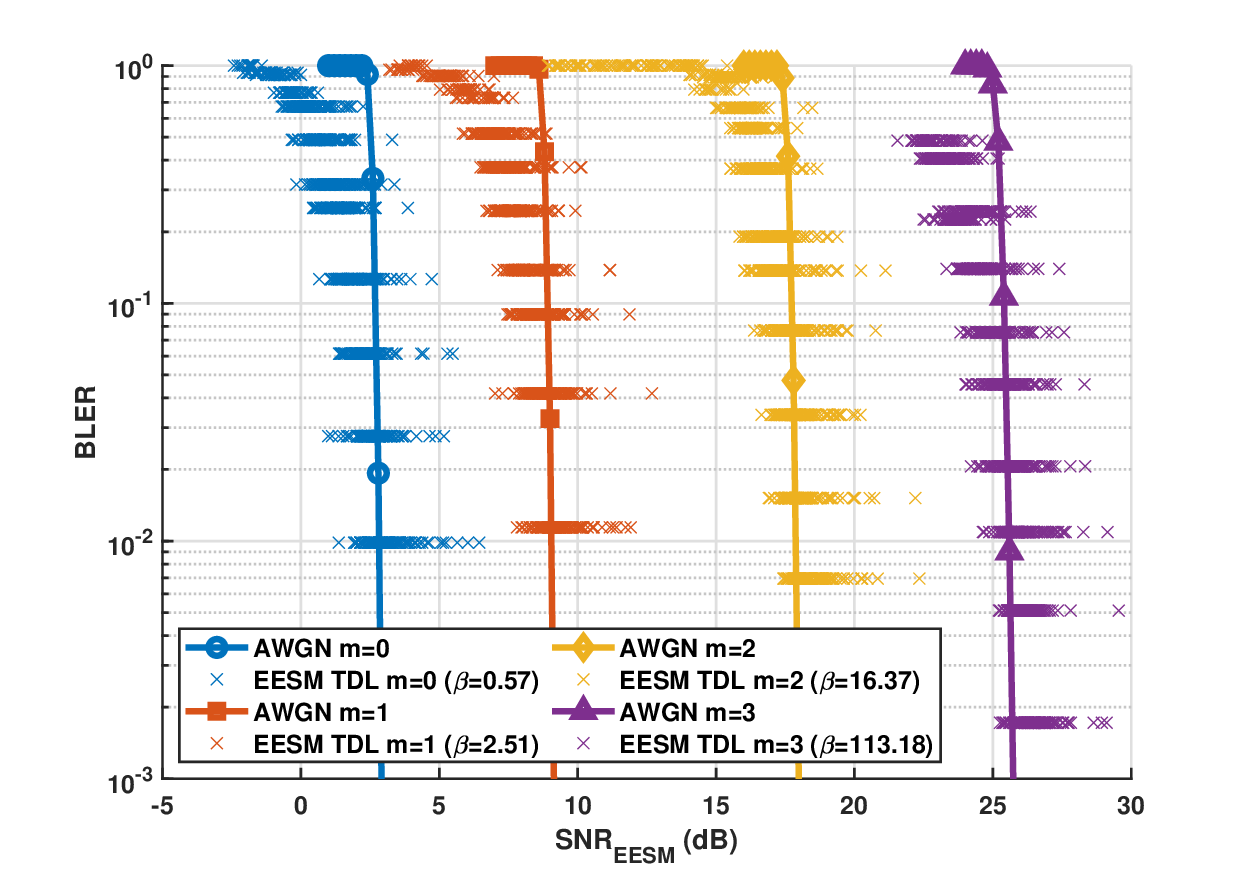}
\caption{AWGN BLER reference curves (lines) and EESM-predicted BLER over TDL-C realizations with 100 ns delay spread (markers) for a SISO link. The calibrated values are $\beta_m = 0.57$, $2.51$, $16.37$, and $113.18$ for $m = 0, 1, 2, 3$.}
\label{fig:eesm_calib}
\end{figure}

\subsection{Energy efficiency versus SNR per MCS}
\label{sec:ee_snr}
The EE of each MCS peaks at the lowest operating SNR that meets its EESM threshold. Figure~\ref{fig:ee_snr} shows the per-MCS EE versus the TDL-C operating SNR, which sets the transmit power through the link budget of Table~\ref{tab:sim}, with the urban macro line-of-sight path loss at 100 m. The EE stays at zero while the BLER exceeds the target and jumps to its peak once the effective SNR, averaged over the $N_{\text{cal}}$ realizations at that operating SNR, reaches $\snrth^{(m)}$. Frequency-selective fading adds a margin over the AWGN threshold, so QPSK turns on near 11 dB, above its 2.9 dB threshold. The EE then decays as the transmit power in the denominator of \eqref{eq:ee_metric} grows while the numerator stays fixed. A higher MCS raises the numerator but also raises $\snrth^{(m)}$ and the required power, so the EE-maximizing MCS depends on the operating SNR. Selecting that MCS at each SNR traces the upper envelope of the four curves. The sweep uses equal power across subcarriers. The allocation of Theorem~\ref{thm:r1} meets each threshold with no more power, so the curves lower-bound the EE of the framework.

The Shannon-capacity baseline is the continuous-rate model optimized in~\cite{NgEtAlEnergyEfficientResourceAllocation2012, MiaoEtAlEnergyefficientLinkAdaptation2010}. Since the power model is otherwise identical, the comparison isolates the effect of the MCS and BLER constraints. The baseline replaces the EESM-based throughput with $\NRE^{\text{PRB}} \, \NRB \, r \, \log_2(1 + \SNR) / \timeT_{\text{slot}}$ at the operating SNR. From 33 dB to 37 dB, where the transmit power reaches the budget $\Pmax$, the baseline overestimates the EESM envelope by 1.5 to 1.7 times. Since the highest MCS of the table is active in this range, the gap does not depend on the choice of the four indices. Beyond the budget, the ratio reaches 2.0 at 45 dB. The EE is zero below the turn-on point of $m = 0$ because the lower-rate QPSK entries of the table are not simulated.
\begin{figure}[t]
\centering
\includegraphics[width=0.85\columnwidth]{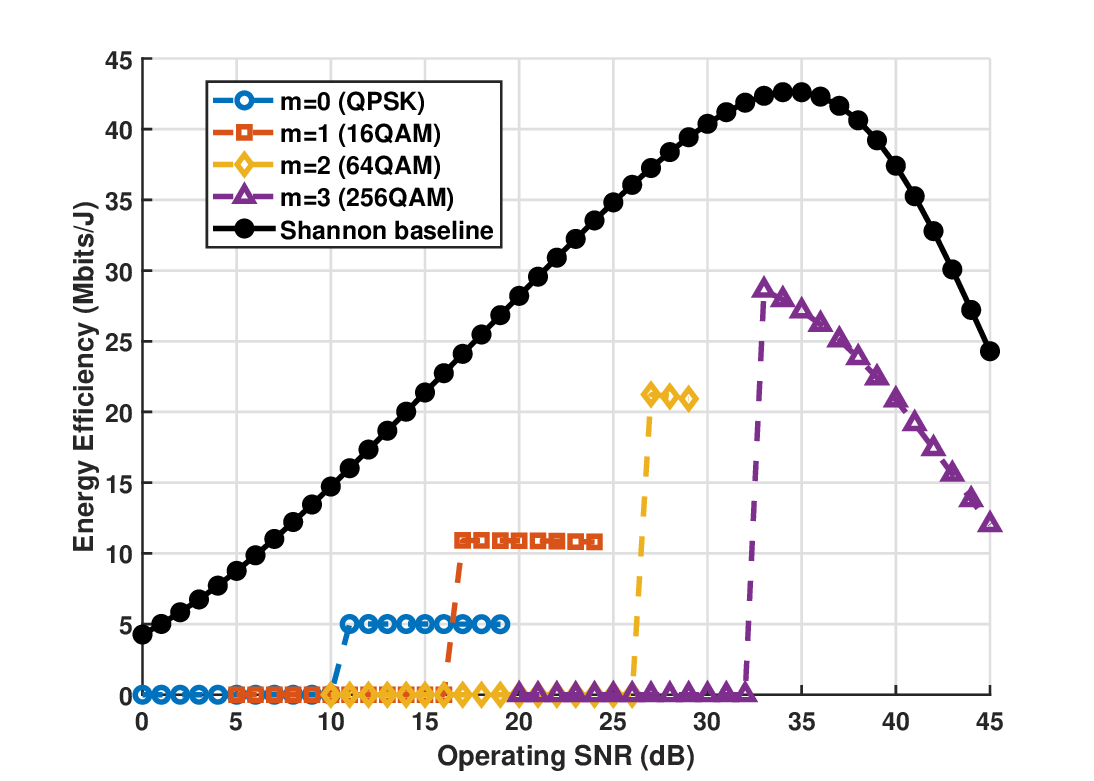}
\caption{EE versus operating SNR for the four MCS indices on a SISO link over a TDL-C channel with $\NRB = 273$. The Shannon-capacity baseline assumes continuous-rate transmission under the same power model.}
\label{fig:ee_snr}
\end{figure}

\subsection{Single-stream versus dual-stream crossover}
\label{sec:crossover}
The power model determines which rank the scheduler should supply at a given load. Let $\Lambda$ denote the traffic demand as a spectral efficiency in bps/Hz over the system bandwidth. The RF-only model keeps the PA term of \eqref{eq:power_model} and replaces $\Pstatic$ with the static power of the active RF chains, $P_{\text{RF},0}^{(r)}$ for rank $r$, at 0.5 W per chain. The example takes a frequency-flat channel $\bH$ with $\bH^{\sfH} \bH = \mathrm{diag}(\sigma_0^2, \sigma_1^2)$, for which the rank-1 precoder of \secref{sec:system} selects one antenna. Single-stream transmission ($r = 1$) then drives one RF chain and dual-stream transmission ($r = 2$) drives both, so $P_{\text{RF},0}^{(1)} = 0.5$ W and $P_{\text{RF},0}^{(2)} = 1.0$ W. Let $P_{\text{tx}}^{(r)}(\Lambda)$ denote the minimum transmit power that meets demand $\Lambda$ at rank $r$. For a fixed demand, the EE-optimal rank is the one with the lower total power. The crossover $\Lambda^*$ solves
\begin{equation}
P_{\text{RF},0}^{(2)} + \frac{P_{\text{tx}}^{(2)}(\Lambda^*)}{\eta_{\text{PA}}} = P_{\text{RF},0}^{(1)} + \frac{P_{\text{tx}}^{(1)}(\Lambda^*)}{\eta_{\text{PA}}}.
\label{eq:crossover}
\end{equation}

Figure~\ref{fig:crossover} illustrates \eqref{eq:crossover} with continuous-rate expressions for $P_{\text{tx}}^{(r)}(\Lambda)$ and stream gains $\sigma_0^2 = 1.5$ and $\sigma_1^2 = 0.5$. In this illustration, single-stream transmission is more power-efficient up to $\Lambda^* = 4.93$ bps/Hz. Below $\Lambda^*$, single-stream transmission consumes less power because of its lower static RF power. Above $\Lambda^*$, dual-stream transmission spreads the demand over two streams and lowers the per-stream transmit power. Baseband terms that scale with the rank add to the left side of \eqref{eq:crossover} and extend the single-stream range.

\begin{figure}[t]
\centering
\includegraphics[width=0.85\columnwidth]{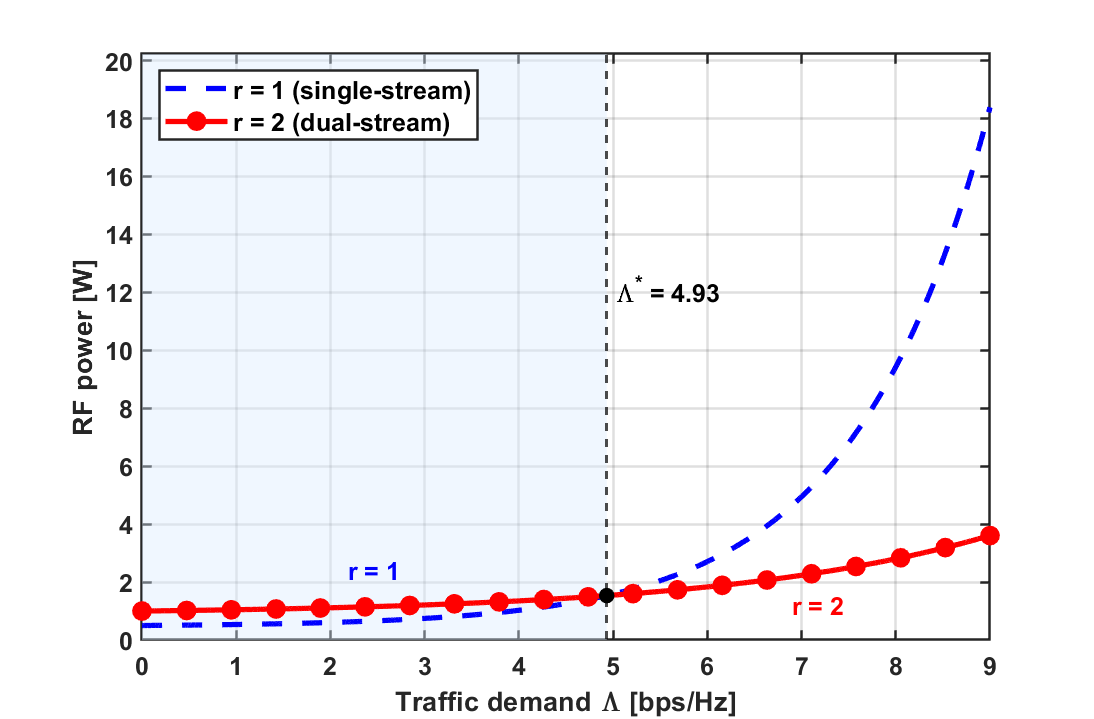}
\caption{RF power versus traffic demand $\Lambda$ for single-stream and dual-stream transmission with $\Nt = \Nr = 2$ under the RF-only model. The curves cross at $\Lambda^* = 4.93$ bps/Hz, and the shaded region marks the demand range where single-stream transmission uses less power.}
\label{fig:crossover}
\end{figure}

\section{Conclusion}
\label{sec:conclusion}
In this paper, we proposed a closed-form energy-efficient power allocation for the NR PDSCH under EESM-based link abstraction, with the water level set by the EESM threshold and one calibration parameter per MCS. A Shannon-capacity baseline overestimated the EE of equal-power transmission by 1.5 to 1.7 times at high SNR within the transmit power budget. Under an RF-only power model with continuous-rate transmit power, single-stream transmission was more power-efficient up to a traffic demand of 4.93 bps/Hz. We assumed perfect channel state information, a single-user link, and full PRB allocation to isolate the effect of the MCS and BLER constraints. We expect channel estimation errors and interference to raise the transmit power that meets the EESM threshold. The framework depends on the number of streams, so larger arrays enter only through the singular values. Future work will address multi-user and imperfect channel state settings and will compare the framework with MIESM-based link adaptation.

\section*{Acknowledgement}
This work was supported in part by the National Science Foundation under Grant Nos. 2414678 and 2549148, and by the Korean government (RS-2024-00428780, RS-2024-00397216).

\bibliographystyle{IEEEtran}
\bibliography{JP_EE_ICC}

\end{document}